\newcommand{\maybeqed}{}
\newcommand{\dual}[1]{#1^{*}}
\newcommand{\dc}[1]{D_#1}
\newcommand{\boundary}[1]{\partial(#1)}
\newcommand{\rdiv}{\mathcal{P}}
\newtheorem{definition}{Definition}[section]
\newtheorem{lemma}[definition]{Lemma}
\newtheorem{theorem}[definition]{Theorem}
\title{Optimal decremental connectivity in planar graphs}
\author{Jakub Łącki \quad Piotr Sankowski \bigskip \\
    \texttt{\{j.lacki,sank\}@mimuw.edu.pl}\\
    University of Warsaw,\\
    Warsaw, Poland}
\begin{document}

\maketitle

\begin{abstract}
We show an algorithm for dynamic maintenance of connectivity information in an undirected planar graph subject to edge deletions.
Our algorithm may answer connectivity queries of the form `Are vertices $u$ and $v$ connected with a path?' in constant time.
The queries can be intermixed with any sequence of edge deletions, and the algorithm handles all updates in $O(n)$ time.
This results improves over previously known $O(n \log n)$ time algorithm.
\end{abstract}

\section{Introduction}

The \emph{dynamic graph connectivity} problem consists in maintaining connectivity information about an undirected graph, which is undergoing modifications.
Typically, the modifications are additions or removals of edges or vertices.
In this paper we focus on the problems in which each modification adds or removes a single edge.
These problems have three variants: in the \emph{incremental} version, edges can only be added to the graph, in the \emph{decremental} one the edges may only be removed, whereas in the \emph{fully dynamic} version both edge insertions and deletions are allowed.
Graph updates are intermixed with a set of connectivity queries of the form `Are vertices $u$ and $w$ in the same connected component?'

We consider the decremental connectivity problem for planar graphs, and show an algorithm that may answer connectivity queries in constant time and process any sequence of edge deletions in $O(n)$ time. The previously known best running time of $O(n \log n)$ was obtained by using the fully dynamic algorithm.

\subsection{Prior work}
It is easy to see that the incremental graph connectivity can be solved using an algorithm for the union-find problem.
It follows from the result of Tarjan~\cite{Tarjan75} that a sequence of $n$ edge insertions and $n$ queries can be handled in $O(n \alpha(n))$ time, where $\alpha(n)$ is the extremely slowly growing inverse Ackermann function.

There has been a long line of research considering the fully dynamic connectivity in general graphs~\cite{Frederickson85, Eppstein, Henzinger99, Holm01, Thorup00, Kapron13, Nilsen13}.
The best currently known algorithms have polylogartithmic update and query time.
Thorup~\cite{Thorup00} has shown a randomized algorithm with $O(\log n (\log \log n)^3)$ amortized update and $O(\log n / \log \log \log n)$ query time.
An algorithm by Wulff-Nilsen~\cite{Nilsen13} handles updates in slightly worse $O(\log^2 n / \log \log n)$ amortized time, but it is deterministic and answers queries in $O(\log n / \log \log n)$ time.
The best algorithm with worst-case update guarantee is a randomized algorithm by Kapron, King and Mountjoy~\cite{Kapron13}, which processes updates in $O(\log^5 n)$ time and answers queries in $O(\log n / \log \log n)$ time.

For the decremental variant, Thorup~\cite{Thorup99} has shown a randomized algorithm, which can process any sequence of edge deletions in $O(m \log(n^2 / m) + n (\log n)^3 (\log \log n)^2)$ time and answers queries in constant time.
If $m = \Theta(n^2)$, the update time is $O(m)$, whereas for $m = \Omega(n (\log n \log \log n)^2)$ it is $O(m \log n)$.

The picture is much simpler in case of planar graphs.
Eppstein et.\ al~\cite{Eppstein92} gave a fully dynamic algorithm which handles updates and queries in $O(\log n)$ amortized time, but requires that the graph embedding remains fixed.
For the general case (i.e., when the embedding may change) Eppstein et.\ al~\cite{Eppstein96} gave an algorithm with  $O(\log^2 n)$ amortized update time and $O(\log n)$ query time.

In planar graphs, the best known solution for the incremental connectivity problem is the union-find algorithm.
However, for the special case when the final resulting planar graph is given upfront, and after that the edge insertions and queries are given in a dynamic fashion Gustedt~\cite{Gustedt} has shown an $O(n)$ time algorithm.
On the other hand, for the decremental problem nothing better than a direct application of the fully dynamic algorithm is known.
This is different from both general graphs and trees, where the decremental connectivity problems have better solutions than what could be achieved by a simple application of their fully dynamic counterparts.
In case of general graphs, the best total update time was $O(m \log n)$~\cite{Thorup99} (except for very sparse graphs, including planar graphs), compared to $O(m \log n (\log \log n)^3)$ time for the fully dynamic variant.
For trees, only $O(n)$ time is necessary to perform all updates in the decremental scenario~\cite{Alstrup97}, while in the fully dynamic case one can use dynamic trees and obtain $O(\log n)$ worst case update time. 

There has also been some progress in obtaining lower bounds for dynamic connectivity problems.
Tarjan and La Poutr{\'{e}}~\cite{Tarjan79, Poutre96} have shown that incremental connectivity requires $\Omega(\alpha(n))$ time per operation on a pointer machine.
Henzinger and Fredman~\cite{Henzinger98} considered the fully dynamic problem and RAM model and obtained a lower bound of $\Omega(\log n / \log \log n)$, which also works for plane graphs.
This was improved by Demaine and P{\v a}tra{\c s}cu~\cite{Patrascu06} to a lower bound of $\Omega(\log n)$ in cell-probe model.
The lower bound holds also for plane graphs.

\subsection{Our results}
We show an algorithm for the decremental connectivity problem in planar graphs, which processes any sequence of edge deletions in $O(n)$ time and answers queries in constant time.
This improves over the previous bound of $O(n \log n)$, which can be obtained by applying the fully dynamic algorithm by Eppstein~\cite{Eppstein92}, and matches the running time of decremental connectivity on trees~\cite{Alstrup97}.

In fact, we present a $O(n)$ time reduction from the decremental connectivity problem to a collection of incremental problems in graphs of total size $O(n)$.
These incremental problems have a specific structure: the set of allowed union operations forms a planar graph and is given in advance.
As shown by Gustedt~\cite{Gustedt}, such a problem can be solved in linear time. 
Our result shows that in terms of total update time, the decremental connectivity problem in planar graphs is definitely not harder than incremental one.
It should be noted that union-find algorithm can process any sequence of $k$ query or update operations in $O(k \alpha(n))$ time, while our algorithm requires $O(n)$ time to process any sequence of edge deletions and answers queries in constant time.

Moreover, since the fully dynamic connectivity has a lower bound of $\Omega(\log n)$ (even in plane graphs) shown by Demaine and P{\v a}tra{\c s}cu~\cite{Patrascu06}, our results implies that in planar graphs decremental connectivity is strictly easier than fully dynamic one.
We suspect that the same holds for general graphs, and we conjecture that it is possible to break the $\Omega(\log n)$ bound for a single operation of a decremental connectivity algorithm, or the $\Omega(m \log n)$ bound for processing a sequence of $m$ edge deletions. 

Our algorithm, unlike the majority of algorithms for maintaining connectivity, does not maintain the spanning tree of the current graph.
As a result, it does not have to search for a replacement edge when an edge from the spanning tree is deleted.
It is based on a novel and very simple approach for detecting bridges, which alone gives $O(n \log n)$ total time.
We use the fact that a deletion of edge $uw$ in the graph causes some connected component to split if both sides of $uw$ belong to the same face.
This condition can in turn be verified by solving an incremental connectivity problem in the dual graph.
When we detect a deletion that splits a connected component, we start two parallel DFS searches from $u$ and $w$ to identify the \emph{smaller} of the two new components. Once the first search finishes, the other one is stopped.
A simple argument shows that this algorithm runs in $O(n \log n)$ time.

We then show that the DFS searches can be speeded up using an $r$-division, that is a decomposition of a planar graph into subgraphs of size at most $r = \log^2 n$. This gives an algorithm running in $O(n \log \log n)$ time. For further illustration of this idea we show how to apply it recursively
in order to obtain an $O(n \log^{*} n)$ time algorithm. However, we observe that it is enough to use this recursion only twice.
This is because the $O(n \log \log n)$ time algorithm, as an intermediate step, reduces the problem of maintaining connectivity in the input graph to maintaining connectivity in a number of graphs of size at most $r = \log^2 n$. By using this reduction twice, we reduce the problem to graphs of size $O(\log^2 \log n)$. The number of such graphs is so small that we can simply precompute the answers for all of them and use these precomputed answers to obtain the main result of the paper. The preprocessing of all graphs of bounded size is again an idea that, to the best of our knowledge, has never been previously used for designing dynamic graph algorithms.

\subsection{Organization of the paper}
In Section~\ref{sec:preliminaries} we introduce notation and recall some of the concepts that we later use.
The following sections describe our algorithm.
We start with the description of the simple $O(n \log n)$ time algorithm in Section~\ref{sec:n_log_n}, and then in every section we show an improvement in the running time.

In Section~\ref{sec:n_log_log_n} we show how to use $r$-division to get an $O(n \log \log n)$ algorithm.
Section~\ref{sec:n_log_star_n}, shows how to improve the reduction, so that it can be used more than once, which results in an $O(n \log^{*} n)$ time algorithm. Finally, in Section~\ref{sec:n} we show how to solve the decremental connectivity in optimal time for graphs of size $O(\log^2 \log n)$, after initial preprocessing. This, combined with the reduction applied twice, gives the main result of the paper.

\section{Preliminaries}
\label{sec:preliminaries}
Let $G=(V, E)$ be an undirected, unweighted planar graph, and $n = |V|$.
By $V(G)$, $E(G)$ and $F(G)$ we denote the sets of vertices, edges and faces of $G$.
By Euler's formula $|V(G)| - |E(G)| + |F(G)| = |CC(G)| + 1$, where $CC(G)$ is the set of connected components of $G$.
The \emph{dual graph} $\dual{G}$ is constructed from $G$ by embedding a single vertex in every face of $G$ and connecting the vertices in adjacent faces of $G$.
Note that if two faces $f_1$, $f_2$ share more than one edge, $\dual{G}$ has multiple edges between $f_1$ and $f_2$.

In the paper we deal with algorithms that maintain the connectivity information about a graph $G$ subject to edge deletions.
By the total running time we denote the total time of handling deletions of all edges from the graph.

The identifier of a connected component (henceforth denoted \emph{cc-identifier}) is a value assigned to a vertex $v \in V$, which uniquely identifies the connected component of $G$, i.e., two vertices have the same cc-identifier if and only if they belong to the same connected component.
The cc-identifiers change as the edges are deleted, and they may not be preserved after edge deletion.
An algorithm maintains cc-identifiers \emph{explicitly} if after every deletion it returns the list of changes to the cc-identifiers.
We assume that cc-identifiers are $O(\log n)$-bit integers.
Note that an algorithm which maintains cc-identifiers explicitly can be simply turned into an algorithm with constant query time.
In order to answer a query regarding two vertices, it suffices to compare the cc-identifiers of the two vertices.
By definition, the vertices are in the same connected component if and only if their cc-identifiers are equal.

Let us now recall the notion of an $r$-division.
A \emph{region} $R$ is an edge-induced subgraph of $G$.
A \emph{boundary vertex} of a region $R$ is a vertex $v \in V(R)$ that is adjacent to an edge $e \not\in E(R)$.
We denote the set of boundary vertices of a region $R$ by $\boundary{R}$.
An \emph{$r$-division} $\rdiv$ of $G$ is a partition of $G$ into $O(n/r)$ edge-disjoint regions (which might share vertices), such that each region contains at most $r$ vertices and $O(\sqrt{r})$ boundary vertices.
The set of boundary vertices of a division $\rdiv$, denoted $\boundary{\rdiv}$ is the union of the sets $\boundary{R}$ over all regions $R$ of $\rdiv$.
Note that $|\boundary{\rdiv}| = O(n / \sqrt{r})$.

\begin{lemma}[\cite{Klein-13,Arge-13}]\label{lem:r-division}
Let $G=(V,E)$ be an $n$-vertex biconnected triangulated planar graph and $1 \leq r \leq n$.
An $r$-division of $G$ can be constructed in $O(n)$ time.
\end{lemma}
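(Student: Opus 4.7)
The plan is to combine a linear-time planar separator with a recursive decomposition. The Miller--Goodrich construction gives, in $O(n)$ time on any $n$-vertex biconnected triangulated planar graph, a simple cycle separator of size $O(\sqrt{n})$ whose removal leaves two subgraphs each of at most $\tfrac{2}{3}n$ vertices. I would apply this separator recursively to each side and stop once a piece has at most $r$ vertices; the leaves of the resulting recursion become the regions of the $r$-division, and the boundary vertices of a region are exactly the separator vertices inherited along its root-to-leaf path.

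The combinatorial bounds follow from standard arguments. A weight-accounting argument shows that the recursion tree has $O(n/r)$ leaves: every piece of size greater than $r$ is split into children that shrink by a constant factor, so only $O(n/r)$ pieces of size $\Theta(r)$ can occur among the leaves, and smaller leaves arise only from pieces that had already shrunk below $r$ further up the tree. For boundary vertices, a region at depth $d$ inherits $O(\sqrt{s_0}+\cdots+\sqrt{s_{d-1}})$ separator vertices from its ancestors, where the piece sizes $s_i$ decay at least geometrically with $s_{d-1}\le r$; this sum is $O(\sqrt{r})$.

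The main obstacle, and the only place where one must go beyond the classical argument, is shaving the $\log(n/r)$ factor that naively arises from doing $\Theta(n)$ work per level of recursion. To achieve genuinely linear time I would follow the approach of Klein--Mozes--Sommer (and of Arge et al.): precompute a single BFS tree of $G$ in $O(n)$ time upfront, and at each recursive call extract the separator as a fundamental cycle with respect to this \emph{global} tree restricted to the current piece. With pieces represented compactly by their boundaries and pointers into the global BFS tree, the work of a single call becomes proportional to the size of its separator rather than to the piece itself; amortizing this work over the $O(n/r)$ leaves of the recursion then gives $O(n)$ total time. The biconnectedness and triangulation hypotheses are what keep this bookkeeping clean: they guarantee that the fundamental cycle is always a simple cycle and that every face has constant size, so no additional vertices need to be introduced during the recursion.
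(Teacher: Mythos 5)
The paper does not prove this lemma; it imports it from Klein--Mozes--Sommer and Arge et al., so your sketch can only be judged on its own merits. It has two genuine gaps. First, your bound on the number of boundary vertices per region is arithmetically backwards: if a leaf region at depth $d$ inherits $O(\sqrt{s_0}+\cdots+\sqrt{s_{d-1}})$ separator vertices from ancestors whose sizes $s_0\ge s_1\ge\cdots\ge s_{d-1}$ decay geometrically from $s_0=n$ down to $s_{d-1}\approx r$, that sum is dominated by its \emph{largest} term and is $O(\sqrt{n})$, not $O(\sqrt{r})$. Indeed a single region of size $r$ can a priori contain a long stretch of the root separator, so with $r=\log^2 n$ (the value this paper actually uses) the naive recursion does not give $O(\sqrt{r})$ boundary vertices per region. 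This is exactly why Frederickson's construction has a second phase: after the size-based splitting, every region with more than $c\sqrt{r}$ boundary vertices is further subdivided using separators that are balanced with respect to the boundary vertices, and a separate accounting argument shows this phase creates only $O(n/r)$ extra regions. Your sketch omits this phase and the per-region boundary bound fails without it. (The $O(n/\sqrt{r})$ bound on the \emph{total} boundary, by contrast, does follow from the accounting you describe.)

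Second, the linear-time claim is the entire technical content of the cited results, and your paragraph asserts rather than establishes it. Saying that each recursive call costs only $O(\mathrm{separator\ size})$ because the separator is a fundamental cycle of a precomputed global BFS tree does not survive scrutiny: to select a \emph{balanced} fundamental cycle one must know how much weight each candidate cycle encloses within the current piece, which naively requires touching the whole piece; the global tree restricted to a piece need not even span it; and after cutting along a cycle each side must be re-triangulated and kept biconnected (and the number of holes per piece controlled) before the separator theorem can be applied again. These are precisely the issues Klein--Mozes--Sommer and Arge et al. spend most of their papers resolving. Your instinct about where the difficulty lies --- shaving the $\log(n/r)$ factor from the level-by-level recursion --- is correct, but as written the argument establishes only the classical $O(n\log n)$ construction, and not even that with the claimed $O(\sqrt{r})$ per-region boundary bound.
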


Let $G$ be a planar graph.
In the preprocessing phase of our algorithms, we build an $r$-division of $G$.
This $r$-division will be updated in a natural way, as edges are deleted from $G$.
Namely, when an edge is deleted from the graph, we update its $r$-division by deleting the corresponding edge.
However, if we strictly follow the definition, what we obtain may no longer be an $r$-division.

For that reason, we loosen the definition of an $r$-division, so that it includes the divisions obtained by deleting edges.
Consider an $r$-division $\rdiv$ built for a graph $G$.
Moreover, let $G'$ be a graph obtained from $G$ by deleting edges, and let $\rdiv'$ be the $r$-division $\rdiv$ updated in the following way.
Let $R$ be a region of $\rdiv$.
Then, we define the graph $R'$ in $\rdiv$ obtained by removing edges from $R$ to be a region of $\rdiv'$, although it may no longer be an edge-induced subgraph of $G'$, e.g., it may contain isolated vertices.
Similarly, we define the set of boundary vertices of $\rdiv'$ to be the set of boundary vertices of $\rdiv$.
Again, according to this definition, a boundary vertex $v$ of $\rdiv'$ may be incident to edges of a single region (because the edges incident to $v$ that belonged to other regions have been deleted).
In the following, we say that $\rdiv'$ is an $r$-division of $G'$.

Since Lemma~\ref{lem:r-division} requires the graph to be biconnected and triangulated, in order to obtain an $r$-division for a graph which does not have these properties, we first add edges to $G$ to make it biconnected and triangulated, then compute the $r$-division of $G$, and then delete the added edges both from $G$ and its division.

Without loss of generality, we can assume that each vertex $v \in V$ has degree at most $3$.
This can be assured by triangulating the dual graph in the very beginning.
In particular, this assures that each vertex belongs to a constant number of regions in an $r$-division.

We assume that all logarithms we use are binary.
We define $\log^{(0)} n := n$ and, for $t > 1$ $\log^{(t)} n := \log^{(t-1)} \log n$.
Moreover, we define the iterated logarithm
 $\log^{*} n := \min \{t : t \in \mathbb{N}, \log^{(t)} n \leq 1\}$.

\section{$O(n \log n)$ time algorithm}
\label{sec:n_log_n}
Let $G$ be a planar graph subject to edge deletions.
We call an edge deletion \emph{critical} if and only if it increases the number of components of $G$, i.e., the deleted edge is a bridge in $G$.
We first show a dynamic algorithm that for every edge deletion decides, whether it is critical.
It is based on a simple relation between the graph $G$ and its dual.

\begin{lemma}
\label{lem:critical_deletions}
Let $G$ be a planar graph subject to edge deletions.
There exists an algorithm that for each edge deletion decides whether it is critical.
It runs in $O(n)$ total time.
\end{lemma}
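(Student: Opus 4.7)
The plan is to exploit the standard planar duality: an edge $e \in E(G)$ is a bridge of the current graph if and only if its dual edge is a self-loop in the current dual, i.e., the two faces that $e$ separated in the original embedding have already been merged into a single face by previous deletions.

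Deleting a non-bridge $e$ merges the two distinct faces on its sides into one face, which on the dual side is exactly the contraction of $\dual{e}$; deleting a bridge leaves the face structure unchanged, since both sides are already the same face. Hence, throughout the algorithm, the current faces of $G$ are in bijection with the super-vertices obtained from $V(\dual{G})$ by contracting, in order, the dual edges of all non-bridge deletions processed so far. This is precisely a union--find problem on $V(\dual{G})$ whose allowed unions are the edges of $\dual{G}$, and $\dual{G}$ is itself planar.

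To realize this, first I would build $\dual{G}$ in $O(n)$ time and initialize Gustedt's incremental connectivity structure~\cite{Gustedt} using $\dual{G}$ as the fixed planar graph of potential union operations. For each edge $e$ deleted from $G$, I would locate the endpoints $f_1, f_2$ of $\dual{e}$ and query whether they currently lie in the same super-face; if so, $e$ is reported as critical, and otherwise it is reported as non-critical and $f_1, f_2$ are unioned in the Gustedt structure. Correctness is immediate from the duality claim, and by Gustedt's theorem the combined cost of all queries and unions is $O(n)$.

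The main difficulty, modest as it is, is just to pin down the primal--dual correspondence as $G$ evolves and to handle a few degenerate cases. Self-loops already present in $\dual{G}$ (bridges of the initial $G$) cause no trouble, since then $f_1 = f_2$ and the query reports a critical deletion immediately. Multi-edges in the dual (two faces sharing more than one boundary edge) simply produce repeated union attempts on pairs of already-merged super-vertices, which are harmless. Once the correspondence is verified, the linear running time follows directly from Gustedt's result.
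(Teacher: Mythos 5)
Your proposal is correct and takes essentially the same approach as the paper: both maintain a union--find structure on the faces of the initial embedding, adding the dual of each deleted edge, and both invoke Gustedt's linear-time algorithm for union--find whose allowed unions form a fixed planar graph (here the dual). The only cosmetic difference is in how criticality is read off: you use the direct characterization that an edge is a bridge iff both its sides lie on the same current face (which is exactly how the paper motivates the lemma in its introduction), whereas the formal proof in the paper instead tracks the number of faces and deduces the component count from Euler's formula; both are valid and cost the same.
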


\begin{figure}
\begin{center}
\includegraphics[scale=0.5]{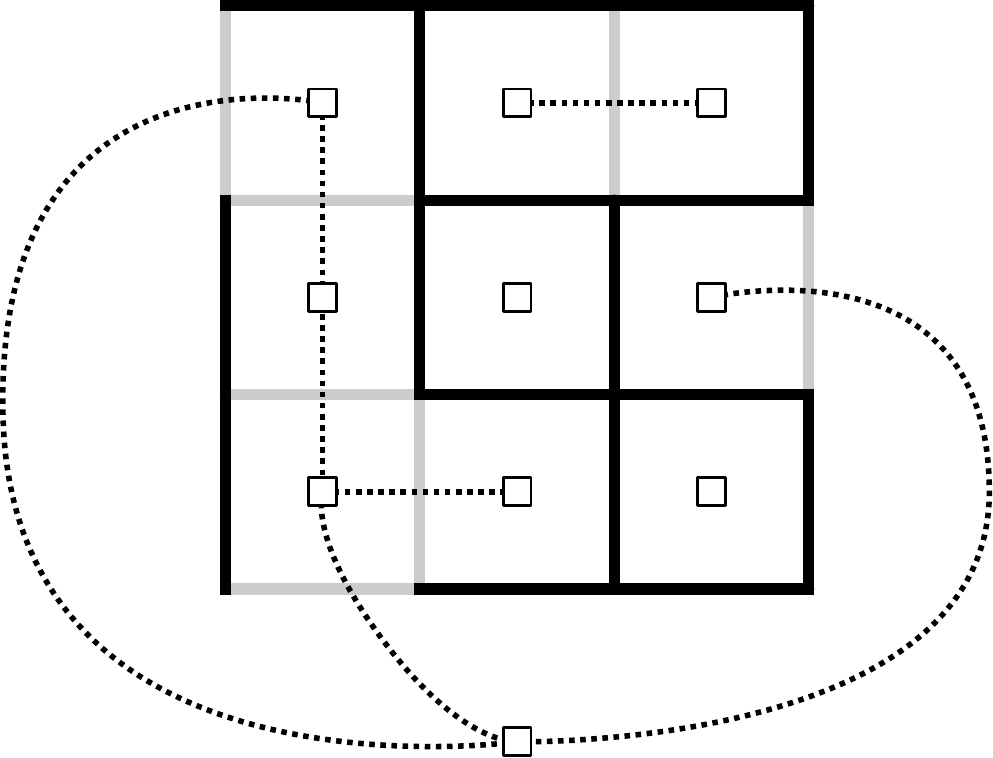}
\end{center}
\caption{\label{fig:dual_complement}The graphs from the proof of Lemma~\ref{lem:n_log_n}.
Edges of $G$ are drawn with solid black lines, whereas the gray lines depict edges that have been deleted from $G$.
The small squares are vertices of $\dc{G}$, and the dotted lines are edges of $\dc{G}$.}
\end{figure}

\begin{proof}
We will maintain the number of faces in $G$.
When an edge $e$ is deleted, we simply have to merge faces on both sides of $e$ (if they are different from each other).
This can be implemented using union-find data structure on the vertices of the dual graph.

More formally, we build and maintain a graph $\dc{G}$.
Initially, this is a graph consisting of vertices of $\dual{G}$ (faces of $G$).
When an edge is deleted from $G$, we add its dual edge to $\dc{G}$ (see Fig.~\ref{fig:dual_complement}).
Clearly, the connected components of $\dc{G}$ are exactly the faces of $G$.
Since edges are only added to $\dc{G}$, we can easily maintain the number of connected components in $\dc{G}$ with a union-find data structure.

This allows us to detect critical deletions in $G$.
After every edge deletion, we know the number of edges and vertices of $G$.
Moreover, we know that the number of faces of $G$ is equal to the number of connected components of $\dc{G}$, which we also maintain.
As a result, by Euler's formula, we get the number of connected components of $G$, so in particular we may check if the deletion caused the number of connected components to increase.
The algorithm executes $O(n)$ find and union operations on the union-find data structure.

However, the sequence of union operations has a certain structure.
Let $G_1$ be the initial version of the graph $G$ (before any edge deletion).
Observe that each union operation takes as arguments the endpoints of an edge of $\dual{G_1}$.
The variant of the union-find problem, in which the set of allowed union operations forms a planar graph given during initialization, was considered by Gustedt~\cite{Gustedt}.
He showed that for this special case of the union-find problem there exists an algorithm that may execute any sequence of $O(n)$ operations in $O(n)$ time, given an $n$-vertex planar graph.
Thus, we infer that our algorithm runs in $O(n)$ time.
\end{proof}

We can now use Lemma~\ref{lem:critical_deletions} to show a simple decremental connectivity algorithm that runs in $O(n \log n)$ total time.

\begin{lemma}
\label{lem:n_log_n}
Let $G$ be a planar graph subject to edge deletions.
There exists a decremental connectivity algorithm that for every vertex of $G$ maintains its cc-identifier explicitly.
It runs in $O(n \log n)$ total time.
\end{lemma}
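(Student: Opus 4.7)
My plan is to combine the critical-deletion detector of Lemma~\ref{lem:critical_deletions} with a classical \emph{smaller-half} exploration. I will maintain, for every vertex $v$, an explicit $O(\log n)$-bit cc-identifier $\mathrm{id}(v)$, initialized so that two vertices agree on this value if and only if they lie in the same connected component of the initial graph. After each deletion I preserve the invariant that the same equivalence continues to hold, which by definition also yields constant-time queries by a single comparison.

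On the deletion of an edge $e$, the first step is to call the algorithm of Lemma~\ref{lem:critical_deletions} to learn whether $e$ is critical. If it is not, the connected components of $G$ do not change and no identifier needs to be updated. If $e=uv$ is critical, its removal splits one component into $C_u \ni u$ and $C_v \ni v$, and the invariant can be restored by allocating a fresh identifier and rewriting $\mathrm{id}$ on the smaller of $C_u, C_v$. To locate that smaller side in time proportional to its size I run two graph searches (say BFS) simultaneously from $u$ and from $v$ inside $G \setminus \{uv\}$, interleaving them so that at any moment both searches have performed the same number of edge traversals. As soon as one of the searches exhausts its side I stop the other; the one that terminated has enumerated exactly the smaller component, which we then relabel. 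Since Lemma~\ref{lem:critical_deletions} has already certified that a split occurs, one of the searches is indeed guaranteed to halt.

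Using the bounded-degree assumption of Section~\ref{sec:preliminaries}, together with per-vertex linked lists of currently incident edges (updated in $O(1)$ time whenever an edge is deleted), each search step is $O(1)$, so the work for a single critical deletion is $O(|S|)$, where $S$ is the smaller new component. The complexity analysis then reduces to the standard doubling argument: every time a vertex $v$ receives a new cc-identifier, the component containing $v$ has been at least halved, so this can happen at most $\lfloor \log_2 n \rfloor$ times over the whole sequence of deletions. Summing over all critical deletions yields $\sum_i |S_i| = O(n \log n)$. Together with the $O(n)$ total cost of Lemma~\ref{lem:critical_deletions} and the $O(1)$ bookkeeping per deletion, the algorithm runs in $O(n \log n)$ total time.

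The one place where care is really needed, and the main pitfall to avoid, is the lockstep interleaving of the two searches. If one were to run them sequentially, or to commit to exploring from a fixed endpoint, the work per critical deletion could be proportional to the \emph{larger} of the two new components and the doubling argument would break down. Running the two BFS's alternately, one edge at a time, is precisely what guarantees that the time charged to each critical deletion is proportional to the smaller side; everything else is routine bookkeeping, so I do not expect any further obstacle.
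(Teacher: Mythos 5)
Your proposal matches the paper's proof essentially verbatim: detect critical deletions via Lemma~\ref{lem:critical_deletions}, run two interleaved searches from the two endpoints, stop as soon as one finishes, relabel that (smaller) side with a fresh identifier, and charge the work via the component-halving argument to get $O(n \log n)$ total time. The only cosmetic differences are BFS versus DFS and synchronizing by edge traversals rather than vertices visited (which, under the bounded-degree assumption, still gives constant-factor shrinkage and hence $O(\log n)$ relabelings per vertex), so the approach and analysis coincide with the paper's.
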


\begin{proof}
We use Lemma~\ref{lem:critical_deletions} to detect critical deletions.
When an edge $uw$ is deleted, and the deletion is not critical, nothing has to be done.
Otherwise, after a critical deletion, some connected component $C$ breaks into two components $C_u$ and $C_w$ ($u \in C_u$, $w \in C_w$) and we start two parallel depth-first searches from $u$ and $w$.
We stop both searches once the first of them finishes.
W.l.o.g. assume that it is the search started from $u$.
Thus, we know that the size of $C_u$ is at most half of the size of $C$.\footnote{Since the graph has constant degree, we may assure that both searches are synchronized in terms of number of vertices visited.}
We can now iterate through all vertices of $C_u$ and change their cc-identifiers to a new unique number.
All these steps require $O(|C_u|)$ time.
The running time of the algorithm is proportional to the total number of changes of the cc-identifiers.
Since every vertex changes its identifier only when the size of its connected component halves, we infer that the total running time is $O(n \log n)$.
\end{proof}

\section{$O(n \log \log n)$ time algorithm}
\label{sec:n_log_log_n}
In order to speed up the $O(n \log n)$ algorithm, we need to speed up the linear depth-first searches that are run after a critical edge deletion.
We build an $r$-division $\rdiv$ of $G$ for $r=\log^2 n$ and use a separate decremental connectivity algorithm to maintain the connectivity information inside each region.
On top of that, we maintain a \emph{skeleton graph} that represents connectivity information between the set of boundary vertices (and possibly some other vertices that we consider important).
Loosely speaking, since the number of boundary vertices is $O(n / \log n)$ we can pay a cost of $O(\log n)$ for maintaining the cc-identifier for each of them.

\begin{figure}
\begin{minipage}{.4\linewidth}
\centering
\subfloat[\label{sub:graph_g}]{\includegraphics[scale=0.6]{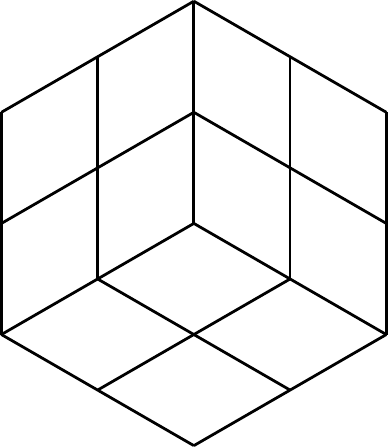}}
\end{minipage}
\begin{minipage}{.4\linewidth}
\centering
\subfloat[\label{sub:g-rdiv}]{\includegraphics[scale=0.6]{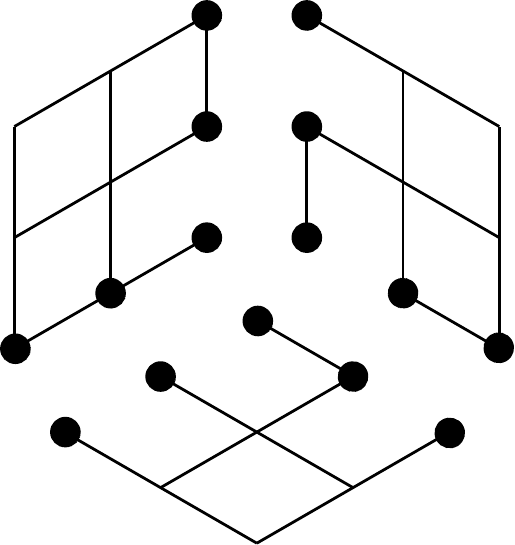}}
\end{minipage}
\medskip
\centering

\hspace{0.1\textwidth}
\subfloat[\label{sub:g_del}]{\includegraphics[scale=0.6]{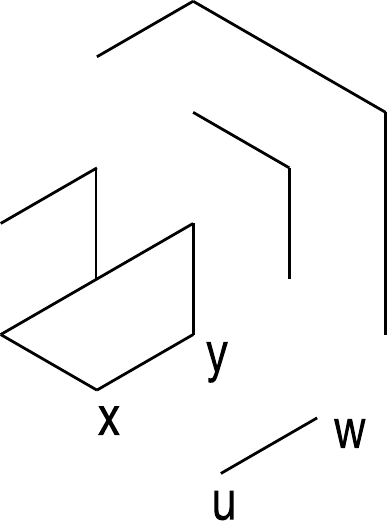}}
\hfill
\subfloat[\label{sub:g_del_rdiv}]{\includegraphics[scale=0.6]{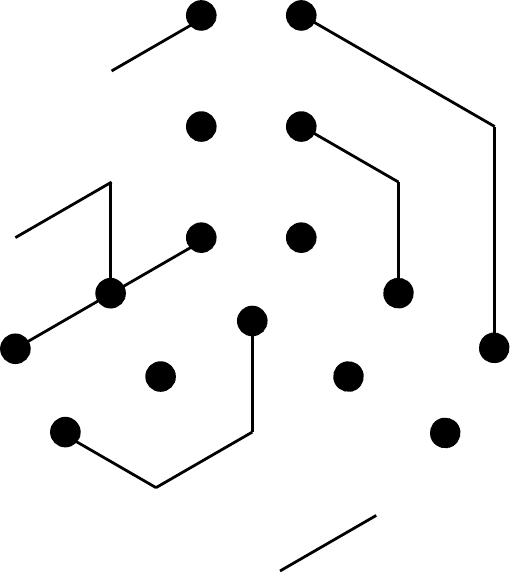}}
\hfill
\subfloat[\label{sub:skeleton}]{\includegraphics[scale=0.6]{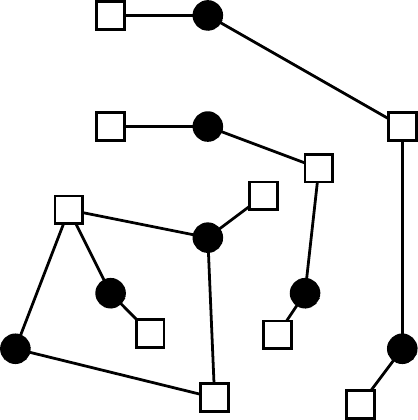}}
\hspace{0.1\textwidth}
\caption{\label{fig:skeleton_graph}Panels~\ref{sub:graph_g} and~\ref{sub:g-rdiv} show a sample graph $G$ and its $r$-division into three regions (boundary vertices are marked with small circles).
    In panel~\ref{sub:g_del} there is graph $G'$ obtained from $G$ by a sequence of edge deletions.
    Panel~\ref{sub:g_del_rdiv} shows its $r$-division obtained from the $r$-division of $G$.
    Finally, panel~\ref{sub:skeleton} contains the skeleton graph of $G'$. 
    Auxiliary vertices are marked with squares.}
\end{figure}

\begin{definition}
Consider an $r$-division $\rdiv$ of a planar graph $G = (V,E)$ and a set $V_s$ (called a \emph{skeleton set}), such that $\boundary{\rdiv} \subseteq V_s \subseteq V$.
The \emph{skeleton graph} for $\rdiv$ and $V_s$ is a graph over the skeleton set $V_s$ and some additional auxiliary vertices.
Consider a region $R$ of $\rdiv$.
Group vertices of $V_s \cap V(R)$ into sets $V_1, \ldots, V_k$, such that two vertices belong to the same set if and only if there is a path in $R$ that connects them.
For each set $V_i$ add a new auxiliary vertex $w_i$ and add an edge $w_ix$ for every $x \in V_i$.
\end{definition}

For illustration, see Fig.~\ref{fig:skeleton_graph}.
Observe that the skeleton graph has $O(|V_s|)$ vertices and edges.
Moreover, if $u, w \in V_s$, then $u$ and $v$ are connected in the skeleton graph if and only if they are connected in $G$.
The skeleton graph is also planar, but our algorithms do not use this property.

In our algorithm we will update the skeleton graph of $G$, as edges are deleted.
Similarly to the $O(n \log n)$ algorithm, we need a way of detecting whether an edge deletion in $G$ increases the number of connected components in the skeleton graph.

\begin{lemma}
\label{lem:skeleton_critical}
Let $G$ be a dynamic planar graph, subject to edge deletions.
Assume that we maintain its skeleton graph $G_s$ computed for an $r$-division $\rdiv$ and a skeleton set $V_s$.
An edge deletion in $G$ causes an increase in the number of connected components in $G_s$ if and only if the deletion is critical in $G$ and there exists a region of $\rdiv$, in which the deletion disconnects some two vertices of $V_s$.
\end{lemma}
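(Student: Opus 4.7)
The plan is to reduce both directions of the equivalence to a single structural invariant of the skeleton graph, namely that two vertices $u, v \in V_s$ lie in the same connected component of $G_s$ if and only if they lie in the same connected component of $G$. Because every auxiliary vertex of $G_s$ has at least one neighbour in $V_s$, this invariant identifies the components of $G_s$ with the components of $G$ that meet $V_s$, so the number of components of $G_s$ strictly increases under a deletion exactly when some component of $G$ splits into two parts that each contain a vertex of $V_s$.

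First I would prove the invariant. The ``same $G_s$-component implies same $G$-component'' direction is immediate from the construction, since every auxiliary vertex is attached only to vertices lying in a common connected component of some region. For the converse, take a $u$--$v$ walk in $G$ and cut it at each boundary vertex it crosses; because the regions of $\rdiv$ are edge-disjoint and $\boundary{\rdiv} \subseteq V_s$, each piece lies inside a single region $R$, has both endpoints in $V_s \cap V(R)$, and is therefore witnessed in $G_s$ by the auxiliary vertex of its $R$-connected class. Gluing consecutive pieces at their shared endpoints yields a $u$--$v$ path in $G_s$.

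With the invariant in hand, it remains to show that the event ``the deletion is critical, splitting some component of $G$ into $A \ni x$ and $B \ni y$ with both $A$ and $B$ meeting $V_s$'' (where $xy$ is the deleted edge) is equivalent to the region-local condition in the statement. For the forward direction, let $R$ be the region that contained $xy$; starting from $x$ and tracing a post-deletion $G$-path toward any chosen $V_s$-vertex of $A$ as long as the edges stay inside $R$, one ends either at that $V_s$-vertex itself or at the boundary vertex through which the path first exits $R$. In either case the endpoint $u'$ lies in $V_s \cap V(R)$ and is $R$-connected to $x$ after the deletion. An analogous construction on the $B$-side produces $v' \in V_s \cap V(R)$ that is $R$-connected to $y$ after the deletion, and $u', v'$ were $R$-connected before the deletion through the edge $xy$. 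For the reverse direction, if $u, v \in V_s \cap V(R)$ are disconnected in $R$ by the deletion, then $xy$ must be an $R$-cut between them, so after the deletion $u$ is $R$-connected to one endpoint of $xy$ and $v$ to the other, forcing $u \in A$ and $v \in B$.

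The step I expect to require the most care is the invariant itself, because after a sequence of deletions the division $\rdiv$ may contain regions with isolated boundary vertices and the equivalence classes of $V_s \cap V(R)$ may collapse to many singletons. This is not actually an obstruction: the skeleton construction creates an auxiliary vertex for every nonempty class, so no $V_s$-vertex is ever detached from $G_s$ and the $V_s$-restricted connectivity of $G_s$ continues to mirror that of $G$ faithfully.
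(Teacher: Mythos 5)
Your proof is correct and follows essentially the same route as the paper's: both arguments rest on the observation that two vertices of $V_s$ are connected in $G_s$ if and only if they are connected in $G$, and then verify the two implications directly. You spell out this invariant and the construction of the witnessing pair $u', v'$ inside the region containing the deleted edge in more detail than the paper does, but the underlying idea is identical.
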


Before we proceed with the proof, let us note that all its conditions are necessary.
In particular, a critical deletion in $G$ may not disconnect some two vertices of a skeleton set in a region (e.g. edge $uw$ in Fig.~\ref{sub:g_del}, whose deletion does not affect the skeleton graph at all).
It may also happen that the deletion is not critical in $G$, but inside some region it disconnects some two vertices of $V_s$ (e.g. edge $xy$ in Fig.~\ref{sub:g_del}).

\begin{proof}
Recall that two vertices of $V_s$ are connected in $G$ iff they are connected in $G_s$.

\noindent ($\implies$) If two vertices of $V_s$ become disconnected in $G_s$, they also become disconnected in $G$, so the edge deletion is critical.
The deletion has to disconnect some two vertices in a region, because otherwise the graph $G_s$ would not change at all.

\noindent ($\impliedby$)
Assume that the deletion disconnected vertices $u, w \in V_s$ in a region $R$.
Thus, the deleted edge was on some path from $u$ to $w$.
Since the edge deletion is critical in $G$, the deleted edge was a bridge in $G$.
After the deletion there is no path from $u$ to $w$ in $G$ and consequently also in $G_s$.
\maybeqed
\end{proof}

We are ready to show the main building block of our $O(n \log \log n)$ algorithm.

\begin{lemma}
\label{lem:first_recursion}
Let $G$ be a planar graph.
Assume there exists a decremental connectivity algorithm that runs in $f(n)$ time and maintains cc-identifiers explicitly.
Then, there exists a decremental connectivity algorithm that runs in $O(n +  n \cdot f(\log^2 n) / \log^2 n)$ time and answers queries in $O(1)$ time.
\end{lemma}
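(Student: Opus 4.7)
The plan is to build an $r$-division $\mathcal{P}$ of $G$ with $r = \log^2 n$ (possible by Lemma~\ref{lem:r-division} after the standard biconnection/triangulation trick), and then to maintain two layers of connectivity information: one inside each region via the given $f(n)$-time algorithm, and one at the ``skeleton level'' via a light-weight structure built on $O(n/\log n)$ vertices. For each region $R \in \mathcal{P}$, I would instantiate the given decremental algorithm on $R$; this takes $\sum_R f(|V(R)|) = O((n/r) \cdot f(r)) = O(n \cdot f(\log^2 n)/\log^2 n)$ total time. For the skeleton, I would take $V_s = \boundary{\mathcal{P}}$ and maintain the skeleton graph $G_s$ for $\mathcal{P}$ and $V_s$, together with an explicit cc-identifier for every vertex of $G_s$. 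I would also maintain, in the background, the structure of Lemma~\ref{lem:critical_deletions} applied to $G$ itself so that critical deletions can be flagged in $O(n)$ total time.

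The handling of an edge deletion $uw$ then proceeds as follows. First, I forward $uw$ to the regional algorithm of the (constantly many) regions containing it; since that algorithm maintains cc-identifiers explicitly, it reports the list of vertices whose in-region cc-identifier changed. Whenever this report indicates that a region-component containing at least one skeleton vertex has split into two pieces (with the skeleton vertices partitioning into $S_1$ and $S_2$), I update $G_s$ by detaching $S_2$ from the old auxiliary vertex and attaching it to a new auxiliary vertex; this costs $O(|S_1|+|S_2|)$, and can be charged to the reported changes, so the total cost of all skeleton modifications is $O(n)$. Second, I use Lemma~\ref{lem:critical_deletions} to learn whether the deletion is critical in $G$. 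By Lemma~\ref{lem:skeleton_critical}, the connectivity of $G_s$ changes exactly when the deletion is critical in $G$ \emph{and} the regional algorithm signaled an in-region split, so I can detect this condition in $O(1)$ extra time per deletion. Whenever it fires, I maintain the explicit cc-identifiers of $G_s$ by running the parallel-DFS halving procedure of Lemma~\ref{lem:n_log_n} inside $G_s$; since $|V(G_s)| = O(n/\sqrt{r}) = O(n/\log n)$ and the halving argument costs $O(N \log N)$ on an $N$-vertex graph, this contributes $O((n/\log n) \cdot \log(n/\log n)) = O(n)$ in total. (Note that this use of Lemma~\ref{lem:n_log_n} does not need planarity of $G_s$: critical deletions in $G_s$ are detected through Lemma~\ref{lem:skeleton_critical} rather than through Lemma~\ref{lem:critical_deletions} applied to $G_s$.)

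For queries I do not maintain global cc-identifiers explicitly; instead I define, for each vertex $v$, a two-level identifier evaluated at query time in $O(1)$: if $v$'s in-region component contains a skeleton vertex, the identifier is the $G_s$-cc-identifier of the corresponding auxiliary vertex; otherwise the in-region component is a connected component of $G$ in its own right, and the identifier is (region index, in-region cc-identifier). Correctness follows from the fact that two vertices of $V_s$ are connected in $G$ iff they are connected in $G_s$, and any vertex outside $V_s$ is connected to another vertex iff their in-region components can be linked through some skeleton vertices. Summing the preprocessing $O(n)$, the regional update time $O(n \cdot f(\log^2 n)/\log^2 n)$, the skeleton-modification cost $O(n)$, and the skeleton connectivity-maintenance cost $O(n)$ gives the claimed bound.

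The main obstacle I expect is the bookkeeping around skeleton updates: I must precisely account for how a single in-region split of a component with $k$ skeleton vertices costs only $O(k)$ (charged to the number of reported cc-identifier changes the inner algorithm produces), and simultaneously ensure that the number of auxiliary vertices ever created is $O(|V_s|) = O(n/\log n)$ so that the halving argument on $G_s$ still yields total cost $O(n)$. Getting this amortization right, and checking that the $O(1)$ query evaluation is consistent across all the above cases, is where the argument has to be careful; everything else is a fairly direct combination of Lemmas~\ref{lem:critical_deletions}, \ref{lem:n_log_n}, and~\ref{lem:skeleton_critical}.
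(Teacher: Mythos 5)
Your proposal follows essentially the same route as the paper: an $r$-division with $r=\log^2 n$, the inner decremental algorithms run per region, the skeleton graph on $\boundary{\rdiv}$ updated from the regions' explicit cc-identifier reports, Lemma~\ref{lem:critical_deletions} plus Lemma~\ref{lem:skeleton_critical} to detect when $G_s$ splits, and the parallel-DFS halving argument giving $O((n/\log n)\log n)=O(n)$ for the skeleton level. Your query procedure (evaluating a two-level identifier on the fly rather than the paper's two explicit checks) is a cosmetic variant that the paper itself adopts in the proof of Lemma~\ref{lem:second_recursion}, and your flagged concerns about the amortization of skeleton updates are real but resolved exactly as you suggest.
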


\begin{proof}
We build an $r$-division $\rdiv$ of $G$ for $r = \log^2 n$.
By Lemma~\ref{lem:r-division}, this takes $O(n)$ time.
For each region of the division, we run the assumed decremental algorithm to handle edge deletions.
Moreover, we use Lemma~\ref{lem:critical_deletions} to detect critical deletions in $G$.

We build the skeleton graph $G_s$ for $G$, $r$-division $\rdiv$ and a skeleton set $V_s = \boundary{\rdiv}$.
We maintain $G_s$, as edges are deleted, that is the deletions to $G$ are reflected in $G_s$. 
This can be done using the decremental algorithms that we run for every region.
Since they maintain the cc-identifiers explicitly (we call these identifiers \emph{local} cc-identifiers), we may detect the moment when some two vertices of $V_s$ become disconnected within one region and $G_s$ needs to be updated.
Note that if a deletion causes $t$ cc-identifiers to change, we may update $G_s$ in $O(t)$ time, so the time for updating $G_s$ is linear in the number of local cc-identifiers that are changed.

For every vertex of $G_s$, we maintain its cc-identifier (called a \emph{global} cc-identifier).
Once $G_s$ is updated after an edge deletion, we use Lemma~\ref{lem:skeleton_critical} to check whether the number of connected components of $G_s$ increased.
According to the lemma, it suffices to check whether the deletion is critical in $G$ (this is reported by the algorithm of Lemma~\ref{lem:critical_deletions}), and whether some two skeleton vertices became disconnected within some region (this can be checked easily by inspecting the changes of the cc-identifiers).

When we detect that the number of connected components of the skeleton graph $G_s$ has increased, similarly to the $O(n \log n)$ algorithm we run two parallel DFS searches to identify the smaller of the two new connected components.
After that, we update the global cc-identifiers.

In order to answer a query regarding two vertices $u$ and $w$, we perform two checks.
First, if the vertices belong to the same region, we check whether there exists a path connecting them that does not contain any boundary vertices.
This can be done by querying the decremental algorithm for the appropriate region.

Then, we check whether there is a path from $u$ to $w$ that that contains some boundary vertex.
For each of the two vertices, we find two arbitrary boundary vertices $b_u$ and $b_w$ that $u$ and $w$ are connected to (note that with no additional overhead we may maintain, for each region and each local cc-identifier, a list of boundary vertices with this cc-identifier).
Then, we check whether $b_u$ and $b_w$ have the same global cc-identifier.

Let us now analyze the running time.
The algorithm of Lemma~\ref{lem:critical_deletions} requires $O(n)$ time.
The decremental algorithms run inside regions take $O(n \cdot f(r) / r) = O(n \cdot f(\log^2 n) / \log^2 n)$ time.
Lastly, we bound the running time of the DFS searches performed to update the global cc-identifiers.
We use an argument similar to the one in the proof of Lemma~\ref{lem:n_log_n}.
The skeleton graph has $O(n / \log n)$ vertices, and each global cc-identifier can change at most $O(\log (n / \log n)) = O(\log n)$ times.
Hence, the DFS searches require $O((n / \log n) \log n) = O(n)$ time.
The lemma follows.
\maybeqed
\end{proof}

By applying Lemma~\ref{lem:n_log_n} to Lemma~\ref{lem:first_recursion}, we obtain the following.

\begin{lemma}
There exists a decremental connectivity algorithm for planar graphs that runs in $O(n \log \log n)$ total time.
\end{lemma}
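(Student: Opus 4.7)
The plan is to simply instantiate Lemma~\ref{lem:first_recursion} with the algorithm from Lemma~\ref{lem:n_log_n}. The statement of Lemma~\ref{lem:n_log_n} provides, for an $n$-vertex planar graph, a decremental connectivity algorithm running in $f(n) = O(n \log n)$ total time that maintains cc-identifiers explicitly. This matches the precise hypothesis required by Lemma~\ref{lem:first_recursion}, so no additional structural work is needed; the task reduces to a small arithmetic check.

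Concretely, I would feed $f(n) = n \log n$ into the bound $O(n + n \cdot f(\log^2 n) / \log^2 n)$ guaranteed by Lemma~\ref{lem:first_recursion}. Evaluating $f(\log^2 n) = \log^2 n \cdot \log(\log^2 n) = 2 \log^2 n \cdot \log \log n$, we get
\[
n \cdot \frac{f(\log^2 n)}{\log^2 n} = n \cdot \frac{2 \log^2 n \cdot \log \log n}{\log^2 n} = 2n \log \log n,
\]
and hence the overall running time is $O(n + n \log \log n) = O(n \log \log n)$, as desired.

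There is essentially no obstacle: the nontrivial content is already encapsulated in Lemma~\ref{lem:first_recursion} (reducing the problem to regions of size $\log^2 n$ via an $r$-division, maintaining the skeleton graph, and charging the global DFS work by the halving argument) and Lemma~\ref{lem:n_log_n} (the bridge-detection via the dual and the halving-based DFS on the full graph). The only thing to verify in the combined proof is that the auxiliary factor $f(r)/r$ for $r = \log^2 n$ telescopes down to $\log \log n$, which is the short computation above.
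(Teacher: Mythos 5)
Your proposal is correct and is essentially identical to the paper's proof: the paper also simply substitutes $f(n) = O(n \log n)$ from Lemma~\ref{lem:n_log_n} into the bound $O(n + n \cdot f(\log^2 n)/\log^2 n)$ of Lemma~\ref{lem:first_recursion} and computes $f(\log^2 n)/\log^2 n = O(\log\log n)$. No differences worth noting.
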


\begin{proof}
Since $f(n) = O(n \log n)$, the running time is $O(n + n \cdot f(\log^2 n) / \log^2 n) = O(n + n \log^2 n \log \log n / \log^2 n) = O(n \log \log n)$.
\end{proof}

\section{$O(n \log^{*} n)$ time algorithm}
\label{sec:n_log_star_n}
In order to obtain a faster algorithm, we would like to use Lemma~\ref{lem:first_recursion} multiple times, starting from the $O(n \log n)$ algorithm, and each time applying the lemma to the algorithm obtained in the previous step.
This, however, cannot be done directly.
While the lemma requires an algorithm that maintains all cc-identifiers explicitly, it does not produce an algorithm with this property.
We deal with this problem in this section.

Observe that in the proof of Lemma~\ref{lem:first_recursion} we only used the algorithms to maintain the cc-identifiers of the vertices of the skeleton set. We show that we can adapt our algorithms to maintain only some of the cc-identifiers.

\begin{lemma}
\label{lem:second_recursion}
Assume there exists a decremental connectivity algorithm for planar graphs that, given a graph $G=(V,E)$ and a set $V_e \subseteq V$ (called an \emph{explicit} set):
\begin{itemize}
\item maintains cc-identifiers of the vertices of $V_e$ explicitly,
\item processes updates in $f(n) + O(|V_e| \log n)$ time,
\item may return the cc-identifier of any vertex in $g(n)$ time,
\end{itemize}
where $f(n)$ and $g(n)$ are nondecreasing functions.

Then, there exists a decremental connectivity algorithm for planar graphs, which, given a graph $G=(V,E)$ and a set $V_e \subseteq V$:
\begin{itemize}
\item maintains cc-identifiers of the vertices of $V_e$ explicitly,
\item processes updates in $O(n + |V_e| \log n + n \cdot f(\log^2 n) / \log^2 n) $ time,
\item may return the cc-identifier of any vertex in $g(\log^2 n) + O(1)$ time.
\end{itemize}
\end{lemma}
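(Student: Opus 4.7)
The plan is to modify the reduction of Lemma~\ref{lem:first_recursion} so that it additionally accepts and produces an explicit set. I would build an $r$-division $\rdiv$ of $G$ for $r = \log^2 n$ and, for every region $R$, run the assumed algorithm on $R$ with the local explicit set $V_e^R := \boundary{R} \cup (V_e \cap V(R))$. The crucial change to the construction is that the skeleton set becomes $V_s := \boundary{\rdiv} \cup V_e$, so every vertex whose identifier we are required to maintain explicitly appears in the skeleton graph $G_s$. Edge deletions are propagated from $G$ to $G_s$ exactly as in Lemma~\ref{lem:first_recursion}: each local cc-identifier change reported by the assumed algorithm inside a region triggers an $O(1)$ update of $G_s$.

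To detect splits of connected components of $G_s$ I would use Lemma~\ref{lem:critical_deletions} together with Lemma~\ref{lem:skeleton_critical}, and on each detected split run the two parallel DFS searches from Lemma~\ref{lem:n_log_n} inside $G_s$, identify the smaller of the two new components, and assign a fresh global cc-identifier to every vertex of $V_s$ on that side. This in particular refreshes the explicit cc-identifier of every vertex of $V_e$ whose component has just changed, which is all the explicit-maintenance contract asks for. To return the cc-identifier of an arbitrary vertex $v$, I would ask the algorithm running in the region $R$ containing $v$ for the local cc-identifier of $v$ in time $g(\log^2 n)$, look up a representative boundary vertex of that local component (stored for each pair of a region and a local cc-identifier, as in Lemma~\ref{lem:first_recursion}), and return its global cc-identifier; when the local component of $v$ is disjoint from $\boundary{R}$, the cc of $v$ in $G$ is entirely contained in $R$, so I instead return a tagged pair consisting of the region id and the local cc-identifier to avoid collisions. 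This gives query time $g(\log^2 n) + O(1)$.

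The running-time analysis hinges on two sums. Summed over regions, the calls to the assumed algorithm contribute $O(n/r) \cdot f(r) + \sum_R O(|V_e^R| \log r)$. Using that each vertex lies in $O(1)$ regions and that $|\boundary{\rdiv}| = O(n/\sqrt{r}) = O(n/\log n)$, we have $\sum_R |V_e^R| = O(|V_e| + n/\log n)$, so the second sum is $O((|V_e| + n/\log n)\log \log n) = O(n + |V_e| \log n)$, while the first equals $O(n \cdot f(\log^2 n)/\log^2 n)$. The DFS cost for maintaining the global cc-identifiers on $V_s$ follows the halving argument of Lemma~\ref{lem:n_log_n}: each vertex of $V_s$ changes its global identifier $O(\log |V_s|) = O(\log n)$ times, $G_s$ has $O(|V_s|)$ edges, and so the DFS work totals $O(|V_s| \log n) = O(n + |V_e| \log n)$. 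Adding the $O(n)$ for building $\rdiv$ and for running Lemma~\ref{lem:critical_deletions} yields exactly the claimed bound.

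The main obstacle I expect is the careful accounting of the additive $|V_e| \log n$ term. The bound on the assumed algorithm per region is $f(r) + O(|V_e^R| \log r)$, and one has to split $V_e^R$ into its boundary part (whose total size over all regions is only $O(n/\log n)$, and which when multiplied by $\log r = O(\log \log n)$ is absorbed into $O(n)$) and its $V_e$-part (which, by constant region multiplicity, contributes only $O(|V_e| \log \log n) \le O(|V_e| \log n)$). A related subtlety is ensuring that a vertex of $V_e$ belonging to several regions receives one globally consistent cc-identifier, but this is automatic because such a vertex is a single node of $G_s$ and its identifier is always read off $G_s$, regardless of which region was consulted.
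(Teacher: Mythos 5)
Your construction is essentially the paper's: the same $r$-division with $r=\log^2 n$, the same enlarged skeleton set $V_s=V_e\cup\boundary{\rdiv}$, the same per-region explicit sets $V_s\cap V(R)$, the same DFS-with-halving maintenance of global cc-identifiers on $G_s$, and the same two-part accounting giving $O(n+|V_e|\log n+n\cdot f(\log^2 n)/\log^2 n)$. The arithmetic in your cost analysis is correct and matches the paper's.

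There is one corner case where your query procedure breaks. You look up a representative of $v$'s local component only among the \emph{boundary} vertices of $R$, and fall back to the tagged pair $(\text{region id},\text{local cc-id})$ when the local component is disjoint from $\boundary{R}$. But a local component can be disjoint from $\boundary{R}$ while still containing a vertex $x\in V_e\cap V(R)$. That $x$ is a node of $G_s$ and its explicitly maintained identifier is a \emph{global} cc-identifier, whereas your query on $v$ (which lies in the same connected component of $G$ as $x$) returns the tagged pair --- two different identifiers for one component, violating the cc-identifier contract. The paper avoids this by keeping, for each region and each local cc-identifier, the list of \emph{all} vertices of $V_s\cap V(R)$ (boundary and $V_e$ alike) with that local identifier, and returning the global identifier of any such vertex if one exists; the tagged pair is used only when the local component contains no vertex of $V_s$ at all. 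The fix is a one-line change to your procedure, but as written the step fails.
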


\begin{proof}
We build an $r$-division $\rdiv$ of $G$ for $r = \log^2 n$.
By Lemma~\ref{lem:r-division}, this takes $O(n)$ time.
We also build a skeleton graph $G_s$, by taking a skeleton set $V_s := V_e \cup \boundary{\rdiv}$.

For each region of $\rdiv$, we run the assumed decremental connectivity algorithm.
Observe that in the proof of Lemma~\ref{lem:first_recursion}, we only need these algorithms to explicitly maintain cc-identifiers of vertices of $V_s$.
Thus, the set of explicit vertices for an algorithm run in a region $R$ is $V_s \cap V(R)$.
The decremental algorithm run for $R$ will maintain local cc-identifiers of these vertices.

We maintain the global cc-identifiers in the skeleton graph $G_s$ in the same way as in the proof of Lemma~\ref{lem:first_recursion}.
The only difference is that now the skeleton set $V_s$ is bigger.
Since $V_s = V_e \cup \boundary{\rdiv}$, this requires $O(n + |V_s| \log n) = O(n + (|V_e| + n / \sqrt{r}) \log n) = O(n + (|V_e| + n / \log n)\log n) = O(n + |V_e| \log n)$ time.
Thus, the update time is $O(n + |V_e| \log n + n \cdot f(\log^2 n) / \log^2 n)$.

Since the cc-identifiers of vertices of $G_s$ are maintained explicitly, in particular we explicitly maintain the cc-identifiers of vertices of $V_e$.
It remains to describe the process of computing the global cc-identifier of an arbitrary vertex $v \in V$.
We first query the decremental algorithm that is run for the region $R$ containing $v$ (in case $v$ is a boundary vertex, we may use an arbitrary region) to obtain the local cc-identifier of $v$.
We check whether there exists a vertex in $V_s \cap V(R)$ that has the same local cc-identifier as $v$.
Since the local cc-identifiers of elements of $V_s \cap V(R)$ are maintained explicitly, at no additional overhead we may simply maintain lists of these vertices, grouped by their local cc-identifier.
If there is a vertex among $V_s \cap V(R)$ with the same local cc-identifier as $v$, we return its global cc-identifier (maintained explicitly).
Otherwise, we return a new cc-identifier by encoding as an integer a pair consisting of the identifier of the region containing $v$ (this requires $O(\log n)$ bits) and the local cc-identifier of $v$ (which requires $O(\log \log n)$ bits).
Thus, obtaining a cc-identifier of an arbitrary vertex requires $g(\log^2 n) + O(1)$ time.
\end{proof}

In order to obtain a faster algorithm we use Lemma~\ref{lem:second_recursion} multiple times.
We prove inductively that for $t = 1, 2, \ldots$ there exists an algorithm $A_t$ which processes updates in $O(t n + n \log^{(t)} n + |V_e| \log n)$ time and returns the cc-identifier of any vertex in $O(t)$ time.
The basis of the induction (algorithm $A_1$) is the algorithm of Lemma~\ref{lem:n_log_n} that maintains cc-identifiers explicitly.
Now, consider $t > 1$, and denote by $f_t(n)$ the running time of algorithm $A_t$. 
We construct algorithm $A_t$ by applying Lemma~\ref{lem:second_recursion} to $A_{t-1}$.
The total update time is
\begin{align*}
O(&n + |V_e| \log n + n\cdot f_{t-1}(\log^2 n)/\log^2 n) \\
  &   = O(n + |V_e| \log n + n/\log^2 n ((t-1) \log^2 n + \log^2 n \log^{(t-1)} \log^2 n))\\
  &   = O(n + |V_e| \log n + n((t-1) + \log^{(t-1)} \log^2 n))\\
        & = O(t n + |V_e|\log n + n \log^{(t-1)} \log^2 n)\\
        & = O(t n + |V_e|\log n + n \log^{(t)} n)
\end{align*}

For $t = \log^{*}n$ and $V_e = \emptyset$ we obtain an algorithm that processes all updates in $O(n \log^{*} n)$ time and answers queries in $O(\log^{*} n)$ time.

From the formal point of view, some comment regarding the recursion is necessary.
When applying Lemma~\ref{lem:second_recursion}, we reduce the problem of maintaining connectivity in a graph on $n$ vertices to a collection of $O(n / r)$ graphs of size at most $r$.
In theory, this statement includes the case when the total size of all graphs increases by a constant factor of $2$, every time we apply the recursion.
However, this cannot happen, as we divide the graph using an $r$-division.
In particular, this means that when creating smaller subproblems we partition the edges of the graph.
In the following section, when we show the main result of the paper, we apply Lemma~\ref{lem:second_recursion} only twice, so this explanation is not necessary.

\section{$O(n)$ time algorithm}
\label{sec:n}
In this section we finally show an algorithm that runs in $O(n)$ time.
We view Lemma~\ref{lem:second_recursion} as a reduction from the problem of maintaining connectivity in a graph of size $n$ to the same problem in a collection of graphs of size $\log^2 n$, whose total size is $O(n)$.
The algorithm ran for a region $R$ is given as the explicit set the set $V_e \cap V(R)$.
Moreover, the query time increases by a constant.
This reduction has an overhead of $O(n + |V_e| \log n)$.

If we use $V_e = \emptyset$, and apply this reduction twice we obtain that in order to maintain connectivity in an $n$-vertex graph, we can maintain connectivity in graphs of at most $O(\log^2 \log n)$ vertices and total size $O(n)$.
We also pay $O(n)$ for this reduction.
However, the number of graphs on at most $O(\log^2 \log n)$ vertices is so small that we can simply precompute their connected components.

\begin{lemma}
\label{lem:tiny_graphs}
Let $w$ be the word size and $\log n \leq w$.
After preprocessing in $o(n)$ time, we may repeatedly initialize and run algorithms for decremental maintenance of connected components in graphs of size $t = O(\log^2 \log n)$.
These algorithms may be given a set of vertices $V_e$, and maintain the cc-identifiers of vertices of $V_e$ explicitly.
An algorithm for a graph of size $t$ runs in $O(t + |V_e|\log t)$ time and may return the cc-identifier of every vertex in $O(1)$ time.
\end{lemma}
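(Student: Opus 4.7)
The strategy is to precompute, in $o(n)$ time, connectivity information for every subgraph on $t = O(\log^2 \log n)$ labeled vertices, and then run each instance using only constant-time word-RAM operations per edge deletion plus an amortized $O(\log t)$ per $V_e$ vertex. The key enabling fact is that $\binom{t}{2} = O(\log^4 \log n) = o(\log n)$, so any subgraph of $K_t$ is encoded by a bitmask $B$ fitting in a single machine word, and there are only $2^{O(t^2)} = o(n)$ such bitmasks. For each $B$ I would precompute (i) a partition word $T[B]$ of $O(t \log t)$ bits listing each vertex's component representative, (ii) the component count $N[B]$, and (iii) for every component representative $c$, the characteristic bitmask $U[B][c]$ of its component. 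Each entry is filled by a BFS on a size-$t$ graph, so the total preprocessing takes $2^{O(t^2)} \cdot \text{poly}(t) = o(n)$.

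An instance for a region $R$ of size at most $t$ is initialized by identifying $V(R)$ with $\{1, \dots, t\}$, fixing a bijection between the edges of $R$ and bit positions of $B$, building a bitmask $V_e^{\star}$ for $V_e$, and reading the initial cc-identifier of each $V_e$ vertex out of $T[B]$; this costs $O(t + |V_e|)$. Each subsequent edge deletion clears one bit of $B$ in $O(1)$, and comparing $N[B]$ before and after detects whether the deletion was critical. A non-critical deletion changes no cc-identifier. For a critical deletion with endpoints $u, w$, I look up the new component labels $c_u = T[B][u]$ and $c_w = T[B][w]$, form the masks $U[B][c_u] \wedge V_e^{\star}$ and $U[B][c_w] \wedge V_e^{\star}$, pick the smaller of the two via a constant-time word popcount, and iterate through its set bits using a tabulated find-first-set primitive, overwriting the cc-identifiers of those vertices with a fresh value. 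A query for any vertex $v$ is answered by extracting $v$'s field from $T[B]$ with a shift and a mask, in $O(1)$.

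The total cost decomposes into $O(t)$ for the $O(t)$ bit flips, criticality checks, and endpoint lookups, plus the cc-identifier rewrites; by the standard halving argument, each $V_e$ vertex survives on the smaller side of at most $O(\log |V_e|) = O(\log t)$ critical events, so the rewrites contribute $O(|V_e| \log t)$ in total, giving the claimed $O(t + |V_e| \log t)$ bound. The main obstacle, and the reason for the heavy tabulation, is making the \emph{smaller-half} step cost $O(1)$ beyond what is charged to the relabeled vertices themselves; this relies on the precomputed $U[B][c]$ masks together with word-RAM popcount and find-first-set, which in turn use the hypothesis $\log n \leq w$ and the inequality $t^2 \ll \log n$ to guarantee that all bitmasks and partition words fit in a single machine word.
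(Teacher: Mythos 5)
Your proposal is correct and takes essentially the same approach as the paper: both tabulate all $2^{O(t^2)} = o(n)$ edge-subsets of $K_t$ as single-word bitmasks, turn each deletion into a bit-clear plus constant-time table lookups, and charge relabelling only to explicit-set vertices landing on the smaller side, giving $O(|V_e|\log t)$ by the halving argument. The only cosmetic difference is that the paper folds $V_e$ into the tabulated state (precomputing, for each state--edge pair, the identifier changes directly), whereas you keep $V_e$ as a runtime mask intersected with precomputed component masks; both yield the same bounds.
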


\begin{proof}
We will call the set $V_e$ the \emph{explicit set}.
The state of the algorithm is uniquely described by the current set of edges in the graph and the explicit set.
There are $2^{t(t-1)/2}$ labeled graphs on $t$ vertices (including non-planar graphs) and $O(2^t)$ possible explicit sets.
Thus, there are $O(2^{t^2})$ possible states, which, for $t = O(\log^2 \log n)$ gives $2^{O(\log^4 \log n)} = 2^{o(\log n)} = o(n)$.
In particular, each state can be encoded as a binary string of length $O(\log^4 \log n)$ which fits in a single machine word.

For each state, we precompute cc-identifiers.
Moreover, for each pair of state and an edge to be deleted, we compute the changes to the cc-identifiers of vertices in the explicit set.
Observe that if the edge deletion is critical, we simply need to compute the set of vertices in the smaller out of the two connected components that are created and store the intersection of this set and $V_e$.
These vertices should be assigned new, unique cc-identifiers.

We encode the graph by a binary word of length $O(\log^4 \log n)$, where each bit represents an edge between some pair of vertices.
Thus, when an edge is deleted, we may compute the new state of the algorithm in constant time by switching off a single bit.
For any planar graph and any sequence of deletions, the number of changes of cc-identifiers of vertices of $V_e$ is $O(|V_e| \log n)$ (using the analysis similar to the one from the proof of Lemma~\ref{lem:n_log_n}).
The query time is constant, since the cc-identifiers are maintained explicitly.
For each of the $2^{O(\log^4 \log n)}$ states, we require $O(\log^4 \log n)$ preprocessing time.
Thus, the preprocessing time is $o(n)$.
\end{proof}

By applying Lemma~\ref{lem:second_recursion} to the algorithm of Lemma~\ref{lem:tiny_graphs}, and then applying Lemma~\ref{lem:second_recursion} to the resulting algorithm we obtain the main result of the paper.

\begin{theorem}
There exists a decremental connectivity algorithm for planar graphs that supports updates in $O(n)$ total time and answers queries in constant time.
\end{theorem}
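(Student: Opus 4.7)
The plan is to follow the composition announced immediately before the theorem: instantiate the base algorithm supplied by Lemma~\ref{lem:tiny_graphs}, then invoke Lemma~\ref{lem:second_recursion} on it twice. Denote the base algorithm by $B_0$; after an $o(n)$ preprocessing pass performed once globally, it handles a graph of $t = O(\log^2 \log n)$ vertices in $O(t + |V_e|\log t)$ update time with $O(1)$ query time. This matches the hypothesis of Lemma~\ref{lem:second_recursion} with $f_0(m) = O(m)$ and $g_0(m) = O(1)$.

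Feeding $B_0$ into Lemma~\ref{lem:second_recursion} yields an algorithm $B_1$ which, on an $m$-vertex planar graph, processes updates in
\[
O\bigl(m + |V_e|\log m + m\cdot f_0(\log^2 m)/\log^2 m\bigr) = O(m + |V_e|\log m)
\]
and answers queries in $g_0(\log^2 m) + O(1) = O(1)$ time. The crucial observation is that $B_1$ satisfies exactly the same functional shape as $B_0$, with $f_1(m) = O(m)$ and $g_1(m) = O(1)$, so I can hand it back to Lemma~\ref{lem:second_recursion} for a second application. That second application, specialized to $V_e = \emptyset$, yields an algorithm that processes an $n$-vertex graph in
\[
O\bigl(n + n\cdot f_1(\log^2 n)/\log^2 n\bigr) = O(n)
\]
total time with $O(1)$ queries, which is precisely the theorem.

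Before finalizing, I will verify two small points. First, the nesting of region sizes is legal: the outer recursion invokes $B_1$ only on regions of at most $\log^2 n$ vertices, and the inner recursion then invokes $B_0$ only on sub-regions of at most $\log^2(\log^2 n) = O(\log^2 \log n)$ vertices, which is exactly the regime covered by Lemma~\ref{lem:tiny_graphs} once the hidden constant in its statement is taken to be at least~$4$. Second, the preprocessing table of Lemma~\ref{lem:tiny_graphs} depends only on the word size and the input size $n$, so it is constructed once and shared across all sub-regions rather than paid per region.

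The main subtlety to watch is that the $|V_e|\log m$ overhead could in principle inflate across the two recursion levels and overshoot the $O(n)$ budget. At the outermost call $V_e = \emptyset$; inside Lemma~\ref{lem:second_recursion} the explicit set passed to each sub-call is the local skeleton set $V_s \cap V(R) = \boundary{R}$ of size $O(\sqrt{r})$, which for $r = \log^2 n$ is $O(\log n)$ per region and $O(n/\log n)$ in total. Its contribution is $O((n/\log n)\cdot\log\log n) = o(n)$, and a symmetric estimate applies at the innermost level. All overhead terms telescope into the advertised $O(n)$ total, which is the only genuine piece of bookkeeping needed; the rest is direct substitution into the two previously established recurrences.
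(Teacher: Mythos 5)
Your proposal is correct and follows exactly the composition the paper intends: Lemma~\ref{lem:tiny_graphs} as the base, Lemma~\ref{lem:second_recursion} applied twice with $V_e=\emptyset$ at the top level. In fact you supply more detail than the paper (which gives only a one-sentence justification), and your checks on region-size nesting, the globally shared preprocessing table, and the $|V_e|\log(\cdot)$ overhead are the right ones.
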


\section{Conclusion and open problems}
\label{sec:discussion}
We have shown a reduction from the decremental connectivity problem in planar graphs to incremental connectivity.
As a result, we obtain an algorithm for decremental connectivity that processes all updates in optimal $O(n)$ time and answers queries in
constant time. This shows that the total time complexity of the deceremental problem is not $\Omega(n \log n)$, which seemed to be a natural
bound. In other words we have shown that a lower bound of $\Omega(n \log n)$, that would be an analogous to the lower bound in~\cite{Patrascu06},
cannot hold for decremental algorithms in  planar graphs. We actually conjecture that even for general graphs there exists an $o(n \log n)$ time decremental algorithm.

An interesting question would be to study the worst-case time complexity of decremental connectivity in planar graphs, which has not been fully understood yet. And, contrary to the incremental problem, no nontrivial lower bounds are known.

\bibliographystyle{plain}
\bibliography{references}

\end{document}